\documentclass[10pt,conference]{IEEEtran}
\IEEEoverridecommandlockouts
\usepackage{cite}
\usepackage{amssymb,amsfonts}
\usepackage{algorithmic}
\usepackage{graphicx}
\usepackage{textcomp}
\usepackage{xcolor}
\usepackage{mathtools}
\usepackage{amsmath, amsthm}
\newtheorem{assumption}{Assumption}
\newtheorem{theorem}{Theorem}

\newtheorem{lemma}{Lemma}

\def\BibTeX{{\rm B\kern-.05em{\sc i\kern-.025em b}\kern-.08em
    T\kern-.1667em\lower.7ex\hbox{E}\kern-.125emX}}
\begin{document}

\title{Do the Defect Prediction Models Really Work?}

\author{\IEEEauthorblockN{Umamaheswara Sharma B}
\IEEEauthorblockA{Department of Computer Science and Engineering \\
National Institute of Technology, Warangal, India \\
uma.phd@student.nitw.ac.in}
\and
\IEEEauthorblockN{Ravichandra Sadam}
\IEEEauthorblockA{Department of Computer Science and Engineering\\
National Institute of Technology, Warangal, India \\
ravic@nitw.ac.in}
}

\maketitle

\begin{abstract}
``\textit{You may develop a potential prediction model, but how can I trust your model that it will benefit my software?}''. Using a software defect prediction (SDP) model as a tool, we address this fundamental problem in machine learning research. This is a preliminary work targeted at providing an analysis of the developed binary SDP model in real-time working environments.
\end{abstract}

\begin{IEEEkeywords}
Software Defect Prediction, Machine Learning, Probabilistic Bounds, Real-time Analysis.
\end{IEEEkeywords}

\section{Introduction}
\label{Introduction}
Due to the rapid development of complex and critical software systems, testing has become a tough challenge. Software defect prediction (SDP) models are being developed to alleviate this challenge \cite{khoshgoftaar1990predicting, lessmann2008benchmarking, herbold2017comparative, zhou2018far, zimmermann2009cross, amasaki2022extended}. The primary objectives in developing SDP models are to reduce the testing time, cost, and effort to be spent on the newly developed software project \cite{Sharma2022Measures}. The task of SDP models is to predict the defect proneness of newly developed software modules.

Once an efficient SDP model has been developed, any organisation may utilise its services. However, it is evident from the machine learning (ML) literature that, in general, the developed prediction models may produce misclassifications on the unseen data \cite{bishop2006pattern}. Owing to the result of either a misclassification (from the prediction model) or ineffective testing, the occurrence of any malfunction in the software modules may cause problems ranging from inconvenience to the loss of life \cite{lyu1996handbook}. It is more likely that a software fails when the prediction model wrongly predicts a defective module.

To know how feasible the SDP models are in the real-time working environments, we provide a theoretical analysis using probabilistic bounds. In a nutshell, the proofs are demonstrated by computing the deviation of a random variable (which is modeled as a hazard rate of a software that utilises SDP models) far from the estimated hazard rate of a manually tested software. Additionally, the proofs are also provided in terms of the measure called reliability.
\section{Preliminaries}
\label{Preliminaries}
We begin by discovering the chances of failures in the system from the predictions of SDP models. There are many ways a system will fail \cite{lyu1996handbook, pressman2005software}. Of which, the primary possible instance is when a defective module is predicted as clean. In such cases, in real-time working environments, the tester may miss the defective module. Now, the following assumption ensures failure incidents from each false negative module on the test set:
\begin{assumption}
\label{Assumption1}
Misclassification of each defective module can cause one failure in the software.
\end{assumption}

This assumption enables us to count the total failures on the test set and on the newly developed project. Since we know the fact that the general testing procedures do not prompt all the defects \cite{pressman2005software}, the following assumption ensures the presence of failures in any software that is tested by using SDP models:
\begin{assumption}
\label{Assumption2}
The integration test, system test, or acceptance test do not prompt the defects for the misclassified defective modules.
\end{assumption}

Now, to measure the percentage of occurrences of the failure cases on the test set, we use a measure called the false omission rate (FOR). The FOR is the ratio of the total number of false negatives over the total predicted clean modules. This is given as:
\begin{equation}
\label{FOR}
    \textit{FOR} = \frac{\text{False Negatives}}{\text{Predicted Cleans}} = \frac{\text{FN}}{\text{FN+TN}}
\end{equation}

Since only predicted clean modules may contain hidden defects, the measure FOR is well suited to estimating the percentage of failure occurrences on the test set. However, in real-time testing, FNs do not provide sufficient information about the failures in software because the actual class label for the predicted clean module is unknown. Hence, we model the actual class for each predicted clean module as a random variable. For any newly developed software $S = \{M_1, M_2, \cdots, M_n\}$ with $n$ modules, let us assume $l, (l \leq n)$ modules are predicted as being from the clean class. Now, the following random variable is used to represent the failure case from the wrongly predicted defective module, $M_i$:
\begin{equation}
\label{IndicatorRV}
X_i = \begin{dcases*}
        1, & if the module $M_i$ is classified wrongly as clean\\
        0, & if the module $M_i$ is classified correctly as clean
      \end{dcases*}
\end{equation}

To provide a guarantee that, for any module $M_i$, $X_i$ takes a value in $\{0,1\}$ with an identical probability, the following assumption must hold true.
\begin{assumption}
\label{Assumption3}
The SDP model is trained on the historical data of the software project(s).
\end{assumption}

In general, the SDP models are being developed on the historical data of the software projects, assuming similar data distributions for the training set, test set, and the population set \cite{khoshgoftaar1990predicting, lessmann2008benchmarking, herbold2017comparative, zhou2018far, zimmermann2009cross, amasaki2022extended, Sharma2022Measures}. From Assumption \ref{Assumption3}, since the SDP model does not change dynamically, we have that each predicted clean module goes into the wrong class with a similar FOR value. That is, the FOR is treated as the probability that each predicted clean module may fall into the defective module. This is given as:
\begin{equation}
\label{FOR-Probability}
    p = \textit{FOR}
\end{equation}

This probability is used to define the failure distribution of the software project. Hence, from Equations \ref{IndicatorRV} and \ref{FOR-Probability}, the probability distribution of the random variable $X_i$ is represented as:
\begin{equation*}
    \textbf{Pr}[X_i=1] = p, \text{ and, }  \textbf{Pr}[X_i=0]=1-p, \text{ for } 1 \leq i \leq l.
\end{equation*}

Now, to count the total failure instances from the prediction model, the following assumption ensures independence between each tested module:
\begin{assumption}
\label{Assumption4}
The SDP model provides predictions for independent observations (software modules).
\end{assumption}

In fact, all the SDP models assume independence between the data points \cite{khoshgoftaar1990predicting, lessmann2008benchmarking, herbold2017comparative, zhou2018far, zimmermann2009cross, amasaki2022extended, Sharma2022Measures}. Since each predicted clean module has a identical probability then it becomes a Bernoulli trial \cite{ross2014introduction}. Now, the sum of $l$ identical Bernoulli trials is said to be a binomial distribution \cite{ross2014introduction,motwani1995randomized}. This is given as: 
\begin{equation}
\label{SumofIndependentTrials}
    X = \sum_{i=1}^{l} X_i
\end{equation}

Now, the mean of the random variable $X$ is derived as follows (using linearity of expectation):
\begin{multline}
\label{Expectation}
    \mathbb{E}[X] = \mathbb{E}\Big[\sum_{i=1}^{l} X_i\Big] = \sum_{i=1}^{l} \mathbb{E}[X_i] 
    = \sum_{i=1}^{l} \big[1 * \textbf{Pr}[X_i=1] \\+ 0 * \textbf{Pr}[X_i=0]\big] = \sum_{i=1}^{l} p = lp
\end{multline}

So far, we have modelled the occurrence of failures (we also call it the hazard rate later in the paper) as a random variable and estimated the expected number of failures (wrong predictions for the defective modules) in a software. It is worth noting that, with no loss of generality, the predicted defective modules will be tested by the tester. The following assumption ensures the presence of failures in some portion of software after its release:
\begin{assumption}
\label{Assumption5}
For some portions of the software other than the predicted clean modules, the hazard rate follows the Weibull distribution.
\end{assumption}

Here, the hazard rate is defined as the instantaneous rate of failures in a software system \cite{lyu1996handbook}. According to Hartz et al. (in \cite{hartz1997introduction}), the hazards in a software (or the part of a software) may not be estimated with a single function. Hence, in order to fit various hazard curves, it is useful to investigate a hazard model of the form that is known as the Weibull distribution. Here, we assume the occurrence of a Weibull distribution of the hazard rate for the rest of the software modules (other than predicted clean modules). Now, for a software that is tested by using both the SDP model and the testers, the total estimated hazards ($\hat{z}(t)$) are calculated as:
\begin{equation}
\label{Modified Hazards}
    \hat{z}(t) = X + \hat{K}t^{\hat{m}}, \text{ for some } \hat{K}>0, \hat{m}>-1, \text{ and time } t>0
\end{equation}

Here, $\hat{z}(t)$ is the hazard rate of a software that is tested by using the SDP model (and later the predicted defective modules are serviced by the testers). Here, $\hat{K}t^{\hat{m}}$ is assumed to be the hazard rate, represented in terms of the Weibull distribution, for the software modules other than predicted clean modules. Here, the parameters $\hat{K}, \hat{m}$, and $t$ will take real values and the inequality constraints for these parameters are adopted directly from the Lyu's work \cite{lyu1996handbook}. Hence, from Assumptions \ref{Assumption1} and \ref{Assumption5}, for the total software modules, the resultant hazard model ($\hat{z}(t)$) is the sum of the hazard rates of the sub parts of the software. Now the expected hazard rate of the software is derived as:
\begin{equation}
\label{Expectation: Modified Hazards}
    \mathbb{E}[\hat{z}(t)] = \mathbb{E}[X + \hat{K}t^{\hat{m}}] = \mathbb{E}[X] + \hat{K}t^{\hat{m}} = lp + \hat{K}t^{\hat{m}}
\end{equation}

To demonstrate the feasibility of SDP models in the real-time scenario, the following assumptions must be met:
\begin{assumption}
\label{Assumption6}
An identical software is used for both the cases of testing using SDP model and manual testing.
\end{assumption}

This is an important assumption in providing proof for the feasibility of SDP models in the real-time scenario. For a software, from Equation \ref{Modified Hazards}, we know that the hazard rate is $\hat{z}(t) = X + \hat{K}t^{\hat{m}}$. Assume the same software that was tested by the testers, for which we have a Weibull distribution of the hazard rate as:
\begin{equation}
\label{Weibull-Hazard}
    z(t) = Kt^m, \text{ for some } K>0, m>-1, \text{ and time } t>0
\end{equation}

The definition of the parameters such as $K, m$ and $t$ is similar to the definition of the parameters in Equation \ref{Modified Hazards}. Note that, at time $t$, the two hazard functions such as $z(t)$ and $\hat{z}(t)$ describe the instantaneous rate of failures in the software when tested manually and with SDP, respectively. Now, for any software, the proofs (given in Section \ref{The Proofs}) provide the tight bounds for the deviation of a random variable far below from the corresponding hazard rate estimated with manual testing. Similarly, another possible approach is to find the deviation of the random variable (expressed in terms of reliability) far above the reliability of the manually tested software. Here, reliability is defined as the probability of failure-free software operation for a specified period of time in a specified environment \cite{lyu1996handbook}. 

The relation between reliability and hazard rate is given below \cite{lyu1996handbook}:
\begin{equation}
\label{Reliability from Hazard Rate}
    R(t) = e^{-\int_0^t z(x) dx}
\end{equation}

Where $R(t)$ is the software's reliability at time $t$, and $z(t)$ is the hazard rate. Using Equation \ref{Reliability from Hazard Rate}, we can derive the reliability values from numerous hazard models in some time interval [0,$t$]. Here, we assume that the two identical software systems (having different testing scenarios—one is manually testing and the other is testing using SDP) are deployed at time 0.

Now, The reliability of the manually tested software is now defined by the Weibull model of a hazard rate ($z(t)$).

\begin{lemma}
\label{Lemma-Reliability-Wibull}
For the Weibull hazard model of a software $z(t) = Kt^m, \text{ for some } K>0, m>-1 \text{ and time } t>0$, its reliability is:
\begin{align*}
    R(t) = e^{\frac{-Kt^{(m+1)}}{m+1}}
\end{align*}
\end{lemma}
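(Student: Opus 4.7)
The plan is to apply the reliability--hazard relation from Equation \ref{Reliability from Hazard Rate} directly to the Weibull hazard $z(t)=Kt^m$ and evaluate the resulting integral in closed form. The parameter constraints $K>0$, $m>-1$, and $t>0$ given in the statement are exactly what is needed to make the integral well-defined and the exponential expression meaningful.

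First I would substitute $z(x)=Kx^m$ into $R(t)=\exp\!\bigl(-\int_0^t z(x)\,dx\bigr)$, pull the constant $K$ outside the integral, and use the antiderivative $\int x^m\,dx = x^{m+1}/(m+1)$, which is valid precisely because $m+1>0$ (so the denominator does not vanish and $x^{m+1}\to 0$ as $x\to 0^+$, guaranteeing integrability at the lower limit even when $-1<m<0$). Evaluating between $0$ and $t$ gives $\int_0^t Kx^m\,dx = Kt^{m+1}/(m+1)$.

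Substituting this back into the reliability formula yields
\begin{equation*}
R(t)=\exp\!\left(-\frac{Kt^{m+1}}{m+1}\right),
\end{equation*}
which is the claimed expression. No genuine obstacle is anticipated; the only subtlety worth flagging is the role of the constraint $m>-1$ in ensuring the improper integral at $0$ converges, and this has already been built into Assumption/Equation \ref{Weibull-Hazard}. The proof is therefore essentially a single direct computation.
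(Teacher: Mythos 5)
Your proof is correct and follows exactly the same route as the paper's: substitute $z(x)=Kx^m$ into Equation \ref{Reliability from Hazard Rate} and evaluate $\int_0^t Kx^m\,dx = Kt^{m+1}/(m+1)$. In fact you are more careful than the paper, which leaves the integral as ``simplifying satisfies the Lemma''; your observation that $m>-1$ is what makes the improper integral at $0$ converge is a worthwhile detail the paper omits.
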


\begin{proof}
Substituting the value of $z(t)$ (from Equation \ref{Weibull-Hazard}) in  $R(t)$ (that is, in Equation \ref{Reliability from Hazard Rate}) yields:
\begin{align}
\label{Reliability-Weibull-Inintial}
    R(t) = e^{-\int_0^t Kx^m \text{ } dx}
\end{align}
Simplifying Equation \ref{Reliability-Weibull-Inintial} satisfies the Lemma.
\end{proof}

The proofs in Section \ref{The Proofs} are valid by ensuring the probability value ($p$) lies in the interval (0,1). Hence, the following assumption must hold true:
\begin{assumption}
\label{Assumption7}
The SDP model should produce at least one false negative and one true negative on the test set.
\end{assumption}

\section{The Proofs}
\label{The Proofs}
\subsection{The tight lower bound in terms of hazard rate}
\label{Bound on the Hazard Rate}
In Section \ref{Preliminaries}, we modelled the number of hazard (failure) instances in a software that is tested by using the SDP model as a random variable (that is, $\hat{z}(t) = X + \hat{K}t^{\hat{m}}$). Now, the following theorem defines the deviation of a random variable, $X + \hat{K}t^{\hat{m}}$ below the value of hazard rate of a manually tested software, $Kt^m$ (in fact, far below from the expectation, $\mu$). 
\begin{theorem}
\label{Theorem-Weibull-Hazard}
Let $X_1, X_2, \dots, X_l$ be the independent Bernoulli trials such that for, $1\leq i \leq l$, $Pr[X_i = 1] = p$, where $0 < p < 1$. Also let $\exists$ parameters $K>0, m>-1, \hat{K}>0, \hat{m}>-1$, time $t>0$. Then for X = $\sum_{i=1}^{l} X_i, \hat{z}(t) = X + \hat{K}t^{\hat{m}}, \mu = \mathbb{E}[X + \hat{K}t^{\hat{m}}] = \hat{K}t^{\hat{m}} + lp$, and for the Weibull hazard model of a manually tested software, $Kt^m$:
\begin{align*}
    Pr[X + \hat{K}t^{\hat{m}} < Kt^m] < e^{\frac{-(lp-Kt^m+2\hat{K}t^{\hat{m}})^2}{2[\hat{K}t^{\hat{m}} + lp]}}
\end{align*}
\end{theorem}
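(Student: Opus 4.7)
The plan is to invoke the multiplicative Chernoff bound for the lower tail of a sum of independent Bernoulli trials, and then reparametrise the threshold in multiplicative form relative to the mean $\mu=\mathbb{E}[X+\hat{K}t^{\hat{m}}]=lp+\hat{K}t^{\hat{m}}$ that appears in the theorem statement. Because $\hat{K}t^{\hat{m}}$ is a deterministic constant, I would first rewrite the target event as $\{X<Kt^{m}-\hat{K}t^{\hat{m}}\}$, which isolates the randomness in the binomial sum $X=\sum_{i=1}^{l}X_{i}$. Assumptions~\ref{Assumption3} and~\ref{Assumption4}, together with the Bernoulli distribution of each $X_{i}$ derived in Section~\ref{Preliminaries}, give exactly the hypothesis (independent, identically distributed Bernoulli trials with parameter $p$) needed to invoke Chernoff.

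Next I would apply the standard Chernoff lower-tail estimate $\Pr[X<(1-\delta)\mu]<\exp(-\mu\delta^{2}/2)$ for $\delta\in(0,1)$, taking $\mu$ to be the shifted mean $lp+\hat{K}t^{\hat{m}}$ dictated by Equation~\ref{Expectation: Modified Hazards}. Solving $(1-\delta)\mu=Kt^{m}-\hat{K}t^{\hat{m}}$ for $\delta$ gives $\delta\mu=\mu-Kt^{m}+\hat{K}t^{\hat{m}}=lp+2\hat{K}t^{\hat{m}}-Kt^{m}$, from which
\[
\frac{\mu\delta^{2}}{2}=\frac{(\delta\mu)^{2}}{2\mu}=\frac{(lp+2\hat{K}t^{\hat{m}}-Kt^{m})^{2}}{2(lp+\hat{K}t^{\hat{m}})}.
\]
This matches the exponent in the theorem verbatim, so the statement would reduce to this single algebraic substitution.

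The bookkeeping is elementary; the real work is verifying that $\delta\in(0,1)$ so that Chernoff applies nontrivially. I would check $0<Kt^{m}-\hat{K}t^{\hat{m}}<lp+\hat{K}t^{\hat{m}}$ directly: the left inequality is the natural non-triviality condition (otherwise the probability is trivially zero and the bound is vacuous), while the right inequality reflects the modelling intuition that the Weibull portion of $\hat{z}(t)$ alone should not already exceed the manually-tested hazard. Positivity of $\mu$ and $p\in(0,1)$ are guaranteed by the theorem's parameter constraints ($\hat{K},t>0$, $\hat{m}>-1$) together with Assumption~\ref{Assumption7}. The main obstacle is therefore conceptual rather than technical: justifying why the Chernoff parametrisation is set against the shifted mean $\mu$ rather than $\mathbb{E}[X]=lp$; I would address this by treating $\hat{z}(t)=X+\hat{K}t^{\hat{m}}$ as the object of interest from the outset (since it is what the theorem compares to $Kt^{m}$) and absorbing the deterministic shift into the mean before invoking the tail estimate.
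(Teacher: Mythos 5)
Your proposal follows the paper's proof essentially verbatim: the same rewriting of the event as $\{X < Kt^m - \hat{K}t^{\hat{m}}\}$, the same Chernoff lower-tail estimate $Pr[X<(1-\delta)\mu]<e^{-\mu\delta^2/2}$ parametrised against the shifted mean $\mu = lp + \hat{K}t^{\hat{m}}$ via $(1-\delta)\mu = Kt^m - \hat{K}t^{\hat{m}}$, and the same algebra yielding the exponent $-(lp+2\hat{K}t^{\hat{m}}-Kt^m)^2/\big(2[lp+\hat{K}t^{\hat{m}}]\big)$. The one place you go beyond the paper is the explicit verification that $\delta\in(0,1)$, i.e.\ $0 < Kt^m - \hat{K}t^{\hat{m}} < lp + \hat{K}t^{\hat{m}}$, which the paper only assumes implicitly when it declares that $Kt^m-\hat{K}t^{\hat{m}}$ ``is assumed to be below the expectation $\mu$.''
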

\begin{proof}
As before, $Pr[X + \hat{K}t^{\hat{m}} < Kt^m]$ can be rewritten as:
\begin{equation}
\label{Rewrite the hazard inequality}
    Pr[X + \hat{K}t^{\hat{m}} < Kt^m] = Pr[X < Kt^m - \hat{K}t^{\hat{m}}]
\end{equation}

We know for some $0<\delta \leq 1$, and $\mu$,  using the Chernoff bound, the lower tail bound for the sum of independent Bernoulli trials, $X$, that deviates far from the expectation $\mu$ is \cite{chernoff1952measure}:
\begin{align}
\label{LowerChernoff}
    Pr[X < (1-\delta)\mu] < e^{\frac{-\mu\delta^2}{2}}
\end{align}

Here, the value $(1-\delta)\mu$ represents the left-side marginal value from the expectation $\mu$ with the band length of $\delta$.

Now, we wish to obtain a tight lower bound that the random variable (that is, $X + \hat{K}t^{\hat{m}}$), that deviates far below from the hazard rate of a manually tested software, $Kt^m$. In Equation \ref{Rewrite the hazard inequality}, for some $K>0, \hat{K}>0, m>-1, \hat{m}>-1$, and $t>0$, the value $Kt^m - \hat{K}t^{\hat{m}}$ is assumed to be below the expectation, $\mu$, in a given time period $[0,t]$. Now, from Equations \ref{Rewrite the hazard inequality} and \ref{LowerChernoff}:
\begin{align*}
    (1-\delta)\mu = Kt^m - \hat{K}t^{\hat{m}}
\end{align*}
\begin{align}
\label{Weibull-Hazard-Delta}
    \Rightarrow \delta = 1 - \frac{Kt^m-\hat{K}t^{\hat{m}}}{\mu}
\end{align}

From Equation \ref{Modified Hazards}, we know the expected hazards in a software, which uses the SDP models is:
\begin{align*}
    \mu = \mathbb{E}[X + \hat{K}t^{\hat{m}}] = \hat{K}t^{\hat{m}} + lp
\end{align*}

Now, substituting $\delta$ (from Equation \ref{Weibull-Hazard-Delta}) and $\mu$ in Equation \ref{LowerChernoff} results in the tight lower bound for the deviation of a random variable (that is, $X + \hat{K}t^{\hat{m}}$) below the hazard rate of a manually tested software. This is expressed below:
\begin{align}
     Pr[X + \hat{K}t^{\hat{m}} < Kt^m] < e^{\frac{-[\hat{K}t^{\hat{m}} + lp]\big[1 - \frac{Kt^m-\hat{K}t^{\hat{m}}}{\hat{K}t^{\hat{m}} + lp}\big]^2}{2}}
\end{align}
Simplifying the above equation will ensure the proof.
\end{proof}

Thus, we have from Theorem \ref{Theorem-Weibull-Hazard} the occurrence of fewer hazards in the software that uses SDP than the occurrence of the total hazards in the same software that is tested by a human is exponentially small in $l, \hat{K}, \hat{m}$, and $t$, implying that at the larger values of these parameters, the bound becomes tighter.
\subsection{The tight upper bound in terms of reliability}
\label{Bound on the Reliability}
In this section, we provide a lemma that calculates the reliability of a software that is tested using the SDP model.
\begin{lemma}
\label{Lemma-Reliability-X}
Let $X_1, X_2, \dots, X_l$ be the independent Bernoulli trials, also let $\exists$ parameters $\hat{K}>0, \hat{m}>-1$, time $t>0$. Then for X = $\sum_{i=1}^{l} X_i$ and $\hat{z}(t) = X + \hat{K}t^{\hat{m}}$, its reliability is:
\begin{align*}
    \hat{R}(t) = e^{-\Big[Xt+\frac{\hat{K}t^{\hat{m}+1}}{\hat{m}+1}\Big]}
\end{align*}
\end{lemma}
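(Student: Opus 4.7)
The plan is to mirror the proof of Lemma \ref{Lemma-Reliability-Wibull}, using the hazard-to-reliability conversion formula in Equation \ref{Reliability from Hazard Rate}. Since the reliability of any system whose hazard rate is $\hat{z}(t)$ is $\hat{R}(t) = e^{-\int_0^t \hat{z}(x)\,dx}$, I would begin by substituting $\hat{z}(x) = X + \hat{K}x^{\hat{m}}$ into that integral, yielding
\begin{equation*}
\hat{R}(t) = \exp\!\left(-\int_0^t \bigl(X + \hat{K}x^{\hat{m}}\bigr)\,dx\right).
\end{equation*}

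Next, I would split the integral by linearity into $\int_0^t X\,dx$ and $\int_0^t \hat{K}x^{\hat{m}}\,dx$. The first integrand is constant in the integration variable $x$: although $X = \sum_{i=1}^{l} X_i$ is a random variable, its realized value is determined by the classification outcomes of the $l$ modules and does not vary with the time variable $x$ over which we are integrating. Hence $\int_0^t X\,dx = Xt$. For the second integral, the power rule applies since the hypothesis $\hat{m} > -1$ guarantees $\hat{m}+1 \neq 0$, giving $\int_0^t \hat{K}x^{\hat{m}}\,dx = \hat{K}t^{\hat{m}+1}/(\hat{m}+1)$. Summing these and substituting into the exponential yields the claimed expression for $\hat{R}(t)$.

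There is no real obstacle here; the only subtlety worth flagging explicitly is the point above, namely that $X$ is treated as a constant with respect to the time integration and that the constraint $\hat{m} > -1$ is exactly what is needed to avoid a divergent (logarithmic) antiderivative at the endpoint $x=0$. Everything else is a routine calculation that parallels Lemma \ref{Lemma-Reliability-Wibull} verbatim.
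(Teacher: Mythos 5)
Your proposal is correct and follows essentially the same route as the paper: substituting $\hat{z}(x) = X + \hat{K}x^{\hat{m}}$ into the hazard-to-reliability formula $R(t) = e^{-\int_0^t z(x)\,dx}$ and evaluating the integral, which the paper leaves as an unstated ``simplification.'' Your explicit remarks that $X$ is constant with respect to the integration variable and that $\hat{m} > -1$ ensures convergence at $x = 0$ are exactly the details the paper omits.
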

\begin{proof}
From Equation \ref{Modified Hazards}, we have the hazards in software, which is tested by using the SDP model. Now substitute the value of $\hat{z}(t)$ (from Equation \ref{Modified Hazards}) in Equation \ref{Reliability from Hazard Rate}, then we have:
\begin{align}
\label{Reliability-X-Initial}
    \hat{R}(t) = e^{-\int_0^t  [X + \hat{K}x^{\hat{m}}] dx}
\end{align}

Here, $\hat{R}(t)$ is a random variable used to represent the reliability of the software which is tested from the predictions of the SDP model. Now, simplifying Equation \ref{Reliability-X-Initial} will result in the reliability of the software that is tested by using the SDP model.
\end{proof}
Now, the expected reliability of a software (which uses SDP models), $\mu_{\hat{R}}$ or $\mathbb{E}[\hat{R}(t)]$ is derived as:
\begin{multline}
\label{Expectation: Modified Reliability-1}
    \mathbb{E}[\hat{R}(t)] = \mathbb{E}\Big[e^{-\Big[Xt+\frac{\hat{K}t^{\hat{m}+1}}{\hat{m}+1}\Big]}\Big] = \mathbb{E}\big[e^{-Xt}\big]e^{\frac{\hat{K}t^{\hat{m}+1}}{\hat{m}+1}}
\end{multline}
We observe that:
\begin{equation}
    \mathbb{E}\big[e^{-Xt}\big] = \mathbb{E}\big[e^{-t\sum_{i=1}^{l}X_i}\big] = \mathbb{E}\big[\prod_{i=1}^{l} e^{-tX_i}\big]
\end{equation}

Since the $X_i$ are independent, the random variables $e^{-tX_i}$ are also independent. It follows that, $\mathbb{E}\big[\prod_{i=1}^{l} e^{-tX_i}\big]=\prod_{i=1}^{l} \mathbb{E}\big[e^{-tX_i}\big]$. Now using these facts in Equation \ref{Expectation: Modified Reliability-1} gives:
\begin{equation}
\label{Expectation: Modified Reliability-2}
    \mathbb{E}[\hat{R}(t)] = e^{\frac{\hat{K}t^{\hat{m}+1}}{\hat{m}+1}} \prod_{i=1}^{l} \mathbb{E}\big[e^{-tX_i}\big]
\end{equation}

Here, the random variable $e^{-tX_i}$ assumes a value $e^{-t}$ with probability $p$, and the value 1 with probability $1-p$. Now computing $\mathbb{E}\big[e^{-tX_i}\big]$ from these values, we have that:
\begin{equation}
\label{Expectation: Modified Reliability-3}
    \prod_{i=1}^{l} \mathbb{E}\big[e^{-tX_i}\big] = \prod_{i=1}^{l}\big[pe^{-t}+1-p\big] = \prod_{i=1}^{l} \big[1+p(e^{-t}-1)\big]
\end{equation}

Now we use the inequality $1+x<e^x$ with $x = p(e^{-t}-1)$, to obtain the expected reliability.
\begin{equation}
\label{Expectation: Modified Reliability}
    \mu_{\hat{R}} = \mathbb{E}[\hat{R}(t)] < e^{\Big[lp(e^{-t}-1)+\frac{\hat{K}t^{\hat{m}+1}}{\hat{m}+1}\Big]}
\end{equation}

The intuition behind the inequality is to provide an easy computation and that does not harm the final bound in the following theorem. Now, by using the Lemmas \ref{Lemma-Reliability-Wibull} and \ref{Lemma-Reliability-X}, the following theorem provides a bound for the deviation of a random variable, $e^{-\Big[Xt+\frac{\hat{K}t^{\hat{m}+1}}{\hat{m}+1}\Big]}$, above the reliability of a manually tested software, $e^{\frac{-Kt^{(m+1)}}{m+1}}$. 

\begin{theorem}
\label{Theorem-Weibull-Reliability}
Let $X_1, X_2, \dots, X_l$ be the independent Bernoulli trials such that for, $1\leq i \leq l$, $Pr[X_i = 1] = p$, where $0 < p < 1$. Also let $\exists$ parameters $K>0, m>-1, \hat{K}>0, \hat{m}>-1$, time $t>0$. Then for X = $\sum_{i=1}^{l} X_i, \hat{R}(t) = e^{-\Big[Xt+\frac{\hat{K}t^{\hat{m}+1}}{\hat{m}+1}\Big]}, \mathbb{E}[\hat{R}(t)] = \mu_{\hat{R}} < e^{\Big[lp(e^{-t}-1)+\frac{\hat{K}t^{\hat{m}+1}}{\hat{m}+1}\Big]}$, and for the Weibull distribution for the Reliability function, $e^{\frac{-Kt^{m+1}}{m+1}}$:
\begin{multline*}
    Pr\Big[e^{-\Big[Xt+\frac{\hat{K}t^{\hat{m}+1}}{\hat{m}+1}\Big]} > e^{\frac{-Kt^{m+1}}{m+1}}\Big] <\\ e^{-\Bigg[\Big[e^{\Big[lp(e^{-t}-1)+\frac{\hat{K}t^{\hat{m}+1}}{\hat{m}+1}\Big]}-\big[\frac{Kt^m}{m+1}-\frac{\hat{K}t^{\hat{m}}}{\hat{m}+1}\big]\Big]^2\frac{1}{2e^{\Big[lp(e^{-t}-1)+\frac{\hat{K}t^{\hat{m}+1}}{\hat{m}+1}\Big]}}\Bigg]}
\end{multline*}
\end{theorem}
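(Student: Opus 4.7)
The plan is to mirror the argument used for Theorem~\ref{Theorem-Weibull-Hazard}, but now applied to the reliability random variable $\hat{R}(t)$ (with expectation $\mu_{\hat{R}}$ derived in Equation~\ref{Expectation: Modified Reliability}) and using the \emph{upper}-tail Chernoff bound $Pr[Y > (1+\delta)\mu] < e^{-\mu\delta^2/2}$ in place of the lower-tail bound used there. The event to bound, $\hat{R}(t) > R(t)$, is a right-tail event for $\hat{R}(t)$, so the same Chernoff machinery applies mutatis mutandis, with the two Lemmas \ref{Lemma-Reliability-Wibull} and \ref{Lemma-Reliability-X} providing the closed forms of the two reliabilities being compared.

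First, following the rewriting trick used in the proof of Theorem~\ref{Theorem-Weibull-Hazard}, I would recast the event so that the stochastic and deterministic parts are cleanly separated. Taking $-\ln$ of both sides of $\hat{R}(t) > e^{-Kt^{m+1}/(m+1)}$ and then dividing by $t>0$ turns the event into $X + \frac{\hat{K}t^{\hat{m}}}{\hat{m}+1} < \frac{Kt^m}{m+1}$, which I would rearrange as $X < \frac{Kt^m}{m+1} - \frac{\hat{K}t^{\hat{m}}}{\hat{m}+1}$; call this right-hand side $B$. This is the exact analogue of the rewriting $X < Kt^m - \hat{K}t^{\hat{m}}$ performed in the proof of Theorem~\ref{Theorem-Weibull-Hazard}, and $B$ is the ``threshold'' that will propagate into the final bound.

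Next, I would apply the upper-tail Chernoff bound to $\hat{R}(t)$ with $\mu = \mu_{\hat{R}}$, identifying the shifted mean $(1+\delta)\mu_{\hat{R}}$ with the threshold $B$. Solving $(1+\delta)\mu_{\hat{R}} = B$ yields $\delta = B/\mu_{\hat{R}} - 1$, from which $\mu_{\hat{R}}\delta^2 = (B-\mu_{\hat{R}})^2/\mu_{\hat{R}}$, so substituting into $e^{-\mu_{\hat{R}}\delta^2/2}$ gives $e^{-(\mu_{\hat{R}}-B)^2/(2\mu_{\hat{R}})}$. Plugging in the closed form $\mu_{\hat{R}} = e^{lp(e^{-t}-1)+\hat{K}t^{\hat{m}+1}/(\hat{m}+1)}$ from Equation~\ref{Expectation: Modified Reliability} and the closed form of $B$ then reproduces the displayed right-hand side exactly.

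The main obstacle will be purely algebraic book-keeping in the last step: $\mu_{\hat{R}}$ is itself a nested exponential, so assembling the squared deviation $(\mu_{\hat{R}}-B)^2$ and dividing it by $2\mu_{\hat{R}}$ produces a long expression that must be collected exactly in the form given in the theorem statement. A secondary technical concern is to verify that the choice $\delta = B/\mu_{\hat{R}} - 1$ lies in the admissible range of the upper-tail Chernoff bound; this will constrain the parameter regime (small $p$, large $t$, or large $K$ relative to $\hat{K}$), but, as in Theorem~\ref{Theorem-Weibull-Hazard}, it does not compromise the qualitative message that the probability decays exponentially in the relevant parameters.
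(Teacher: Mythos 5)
Your proposal follows essentially the same route as the paper's own proof: the same rewriting of the event as $X < \frac{Kt^m}{m+1}-\frac{\hat{K}t^{\hat{m}}}{\hat{m}+1}$, the same identification of this threshold with a shifted mean of $\mu_{\hat{R}}$ to solve for $\delta$, and the same substitution of the bound on $\mu_{\hat{R}}$ from Equation \ref{Expectation: Modified Reliability} at the end. The only cosmetic difference is that you phrase the Chernoff step with $(1+\delta)\mu_{\hat{R}}$ where the paper uses $(1-\delta)\mu_{\hat{R}}$; since $\delta$ enters the exponent squared, both yield the identical expression $e^{-(\mu_{\hat{R}}-B)^2/(2\mu_{\hat{R}})}$.
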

\begin{proof}

The proof for this upper tail is very similar to the proof for the lower tail, as we saw in Theorem \ref{Theorem-Weibull-Hazard}. As before,
\begin{multline}
\label{Rewrite the Reliability inequality}
     Pr\Big[e^{-\Big[Xt+\frac{\hat{K}t^{\hat{m}+1}}{\hat{m}+1}\Big]} > e^{\frac{-Kt^{m+1}}{m+1}}\Big] =
     Pr\Big[Xt < \\ \Big[\frac{Kt^{m+1}}{m+1}-\frac{\hat{K}t^{\hat{m}+1}}{\hat{m}+1}\Big]\Big]
     = Pr\Big[X < \Big[\frac{Kt^m}{m+1}-\frac{\hat{K}t^{\hat{m}}}{\hat{m}+1}\Big]\Big]
\end{multline}

Now, we wish to obtain a tight lower bound that the random variable, $e^{-\Big[Xt+\frac{\hat{K}t^{\hat{m}+1}}{\hat{m}+1}\Big]}$, deviates far from the value $e^{\frac{-Kt^{m+1}}{m+1}}$. In Equation \ref{Rewrite the Reliability inequality}, for some $K>0, \hat{K}>0, m>-1, \hat{m}>-1$, and $t>0$, the value $\Big[\frac{Kt^m}{m+1}-\frac{\hat{K}t^{\hat{m}}}{\hat{m}+1}\Big]$ is assumed to be below the expectation, $\mu_{\hat{R}} = \mathbb{E}[\hat{R}(t)]$, in a given time period $[0,t]$. Now, equating the Equations \ref{LowerChernoff} and Equation \ref{Rewrite the Reliability inequality}, then we get:
\begin{align*}
    (1-\delta)\mu_{\hat{R}} = \frac{Kt^m}{m+1}-\frac{\hat{K}t^{\hat{m}}}{\hat{m}+1}
\end{align*}
\begin{align}
\label{Weibull-Reliability-Delta}
    \Rightarrow \delta = 1 - \Big[\frac{Kt^m}{m+1}-\frac{\hat{K}t^{\hat{m}}}{\hat{m}+1}\Big]\frac{1}{\mu_{\hat{R}}}
\end{align}

Now, substitute the value of $\delta$ (from Equation \ref{Weibull-Reliability-Delta}) in Equation \ref{LowerChernoff} to obtain the tight upper bound (expressed in terms of lower bound) for the deviation of a random variable (that is, $e^{-\Big[Xt+\frac{\hat{K}t^{\hat{m}+1}}{\hat{m}+1}\Big]}$) from the reliability (which is derived from the Weibull model of the hazard rate) of a manually tested software $e^{\frac{-Kt^{m+1}}{m+1}}$. This is expressed below:
\begin{align}
    Pr\Big[e^{-\Big[Xt+\frac{\hat{K}t^{\hat{m}+1}}{\hat{m}+1}\Big]} > e^{\frac{-Kt^{m+1}}{m+1}}\Big] < e^{\frac{-\mu_{\hat{R}}\Bigg[1-\frac{1}{\mu_{\hat{R}}}\Big[\frac{Kt^m}{m+1}-\frac{\hat{K}t^{\hat{m}}}{\hat{m}+1}\Big]\Bigg]^2}{2}}
\end{align}
After simplification, we get: 
\begin{align}
\label{Final Bound using Reliability}
    Pr\Big[e^{-\Big[Xt+\frac{\hat{K}t^{\hat{m}+1}}{\hat{m}+1}\Big]} > e^{\frac{-Kt^{m+1}}{m+1}}\Big] < e^{-\Big[\mu_{\hat{R}}-\big[\frac{Kt^m}{m+1}-\frac{\hat{K}t^{\hat{m}}}{\hat{m}+1}\big]\Big]^2\frac{1}{2\mu_{\hat{R}}}}
\end{align}

Substituting the expected reliability $\mu_{\hat{R}}$ from Equation \ref{Expectation: Modified Reliability} in Equation \ref{Final Bound using Reliability} accomplishes the proof.
\end{proof}
Thus, we have from Theorem \ref{Theorem-Weibull-Reliability}, the possibility of getting better reliability in the software that uses SDP than in the same software that is tested by a human is exponentially small in $l, \hat{K}, \hat{m}$, and $t$, implying that, similar to the result of Theorem \ref{Theorem-Weibull-Hazard}, at the larger values of these parameters, the bound becomes tighter.

\section{Future Plans}
\label{Future Plans}
Theorems \ref{Theorem-Weibull-Hazard} and \ref{Theorem-Weibull-Reliability} provides preliminary bounds for the post-analysis of the binary classification model (SDP model) in real-time working environments. We believe that providing a critique of the developed binary classification model in the real-time working environment is novel in machine learning theory and has the potential to provide insight into the feasibility of other applications (such as safety-critical applications, for example, tumour prediction systems for medical diagnosis, online fraud detection, etc.). Within the scope of this work, the extensions of Theorems \ref{Theorem-Weibull-Hazard} and \ref{Theorem-Weibull-Reliability} are numerous. A few examples include, 1) the bounds become more specific to the application if the state-of-the-art hazard (and reliability) models are used in the construction of the proof, 2) new bounds derived if the random variable $X$ is assumed to be a function of time, $t$, and 3) new bounds derived assuming the dependency among the random variables (relaxing Assumption \ref{Assumption4}). In this case, we derive bounds assuming the presence of cascading failures in the software as a result of SDP model.  

\bibliographystyle{IEEEtran}
\bibliography{ICSE-NIER}

\end{document}